\definecolor{britishracinggreen}{rgb}{0.0, 0.5, 0.15}
\definecolor{burgundy}{rgb}{0.5, 0.0, 0.13}
\definecolor{egyptianblue}{rgb}{0.06, 0.2, 0.65}
\let\newfloat\newfloat@ltx
\newcommand{\ketbra}[2]{\ket{#1}\!\bra{#2}}
\newcommand{\reals}{{\mathbb R}}
\begin{document}
\title{Improving Quantum Approximate Optimization by Noise-Directed Adaptive Remapping}
\author{Filip B. Maciejewski}
\email{fmaciejewski@usra.edu}
\author{Jacob Biamonte}
\author{Stuart Hadfield}
\author{Davide Venturelli}
\email{davide.venturelli@nasa.gov}
\affiliation{\quail}
\affiliation{\riacs}
\date{2025.11.01}
\begin{abstract}
\begin{minipage}{1.0\linewidth}
We present Noise-Directed Adaptive Remapping (NDAR), a heuristic algorithm for approximately solving binary optimization problems by leveraging certain types of noise.
We consider access to a noisy quantum processor with dynamics that features a global attractor state.
In a standard setting, such noise can be detrimental to the quantum optimization performance.
Our algorithm bootstraps the noise attractor state by iteratively gauge-transforming the cost-function Hamiltonian in a way that transforms the noise attractor into higher-quality solutions.
The transformation effectively changes the attractor into a higher-quality solution of the Hamiltonian based on the results of the previous step.
The end result is that noise aids variational optimization, as opposed to hindering it.
We present an improved Quantum Approximate Optimization Algorithm (QAOA) runs in experiments on Rigetti's quantum device.
We report approximation ratios 0.9--0.96 for random, fully connected graphs on \(n=82\) qubits, using only depth \(p=1\) QAOA with NDAR.
This compares to 0.34--0.51 for standard \(p=1\) QAOA with the same number of function calls. 
\end{minipage}
\end{abstract}

\begin{figure*}[!t]
    \centering
\includegraphics[width=0.98\textwidth]{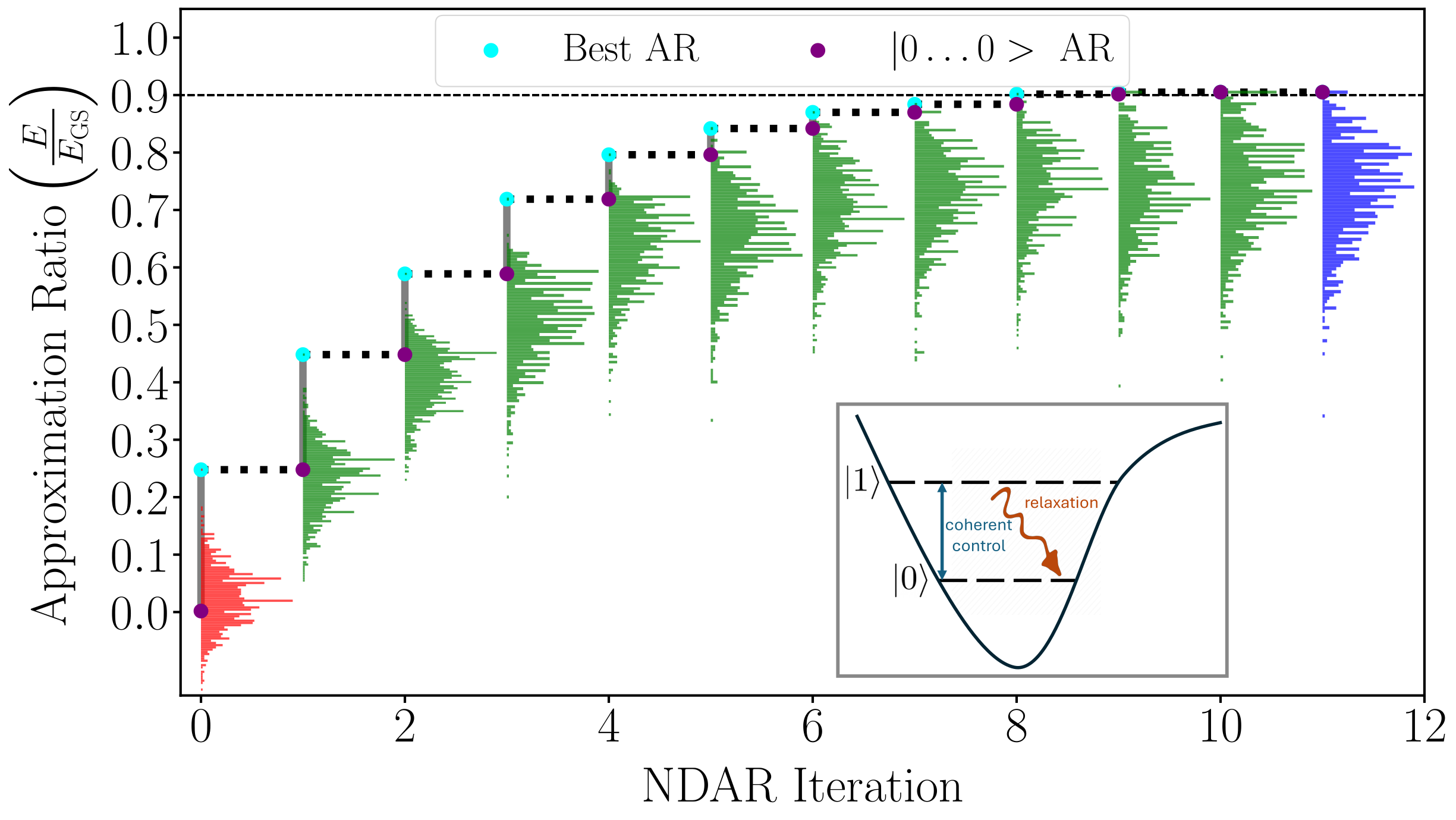}
    \caption{
 Illustration of a Noise-Directed Adaptive Remapping run for a single Hamiltonian instance.
 The X-axis represents iterations of the algorithm.
 The purple dots correspond to the approximation ratio (AR) of the $\ket{0\dots 0}$ (attractor) state, while cyan dots are the best-observed solution (as a result of QAOA runs).
 In consecutive steps, the Hamiltonian is re-transformed in a way that the best solution from the previous step effectively becomes the (approximate) attractor of the new step, resulting in a stairs-like climb towards higher approximation ratios until convergence.
 The distributions of ARs are stacked horizontally for each NDAR step, showing how the tail of the distribution from a previous step, becomes a point near the center of the distribution in the next step (in early iterations).
 The presented data corresponds to a single instance from the experimentally obtained dataset (for $n=82$ qubits) in Fig.~\ref{fig:exp:NDAR}b.
 The inset illustrates how binary (spin) variables are associated with the qubit states of a physical QPU system. Qubits are represented by anharmonic oscillators (e.g.~transmons) with ground state $\ket{0}$ and first excited state $\ket{1}$. Gates drive coherent transitions (blue arrows) while noise (red arrow) induces asymmetric incoherent transitions.} \label{fig:ndar_stairs_illustration}
\end{figure*}
 
\section{Introduction}

Noise in state-of-the-art quantum processors limits quantum circuit depth. 
Significant effort has been spent on improving the implementation of quantum approximate optimization~\cite{farhi2014quantum,hadfield2019quantum,abbas2023quantum,sachdeva2024quantumoptimizationusing127qubit,dupont2023quantum, Zhu2022AdaptQAOAOriginal, Dupont2024QRR,maciejewski2023design,dupont2024lightcones,Ebadi2022MIS,Byun2022MIS,Kim2022MIS,King2023dwave5000spins,Nguyen2023rydberg}, and developing various error mitigation~\cite{cai2023quantum} and error correction~\cite{knill1997TheoryOfQEC,roffe2019QECTutorial} techniques.
In this work, we propose an algorithmic framework that does not mitigate the noise directly, but instead attempts to align the quantum optimization with the dissipative part of the evolution by using the knowledge about the noise.

To this aim, we exploit the concept of bitflip transforms: problem remappings 
that exchange the logical definitions of a set of binary variables, or equivalently that exchange the interpretation of $\ket{0}$ and $\ket{1}$ basis states for a set of physical qubits. 
Assuming that the device noise dynamics is biasing outputs towards a specific classical ``attractor" state (denoted as $\ket{0\dots0}$), we propose an algorithm that iteratively gauge-transforms the problem in a way that consistently assigns better cost-function values to the noise attractor.
The remappings are adaptively selected based on the best solution obtained through variational optimization in the previous step (see Fig.~\ref{fig:ndar_stairs_illustration} for illustration). 

We refer to the new algorithm as Noise-Directed Adaptive Remapping (NDAR).  
The actual variational optimization method is a subroutine of NDAR that can be chosen in a way best suited for a given experimental setup (in fact, it does not necessarily have to be quantum).
In this work, we focus mainly on the canonical example of Quantum Approximate Optimization Algorithm (QAOA)~\cite{farhi2014quantum,hadfield2019quantum} though the adaptive remapping approach we introduce applies to other approaches.
We implement the NDAR adaptive loop for $p=1$ QAOA targeting the minimization of fully-connected, randomly-weighted Sherrington-Kirkpatrick Hamiltonians on $\noq=82$-qubit subsystem of Rigetti Computing's quantum processing unit (QPU) based on superconducting transmons.
We observe solutions with approximation ratios above $\approx 0.9$, compared to $\approx 0.5$ of standard QAOA, when both methods are given the same number of circuit runs.

\paragraph*{Related Work} The use of bitflip gauges (a.k.a.~spin-reversal or spin-inversion transforms) has been considered for quantum annealing in the context of embedding problem instances to ``unbias" individual qubits and generally to reduce performance errors~\cite{boixo2013experimental,king2014algorithm,perdomo2015performance,perdomo2016determination,pudenz2016parameter,pelofske2019optimizing,barbosa2021optimizing,izquierdo2021ferromagnetically,DWaveSRT}. 
In practice, bitflip transforms are often chosen at random, see, e.g., ~\cite{boixo2013experimental,pelofske2019optimizing,DWaveSRT}.
In Ref.~\cite{pelofske2019optimizing}, the authors developed a heuristic algorithm that searches for the best gauge, and presented experimental results on a quantum annealer, treating the bitflip gauge as an additional parameter to be tuned.
Our approach is conceptually different -- the NDAR choice of 
bitflip transforms is informed by the noise model, and it is iteratively updated in the course of implementation.
For quantum gate model applications, some studies investigate and potentially exploit various problem symmetries (such as qubit ordering);
see, for example, Refs.~\cite{cirqQP,qiskitNoise,murali2019noise,ji2023improving,matsuo2023sat, shaydulin2021classical, maciejewski2023design,galda2021transferability,shaydulin2021error}. 
Effects of various types of noise on QAOA performance are studied, for example, in Refs.~\cite{alam2019analysis,xue2021effects,marshall2020characterizing,Maciejewski2021modeling}.
The idea of adaptively changing the ansatz is found in some recent works, such as the ADAPT-QAOA~\cite{Zhu2022AdaptQAOAOriginal} protocol which adaptively constructs the mixer Hamiltonian.
In the case of NDAR applied to QAOA, applying 
a bitflip transform is equivalent to adaptively changing the physically implemented phase separation 
(cost) %(driver) 
Hamiltonians, and adding a simple post-processing procedure (re-labeling). 
Generally speaking, we note that Noise-Directed Adaptive Remapping might fit into a more general framework of noise-assisted quantum algorithms that use open systems dynamics as a resource~\cite{verstraete2008quantumcomputationquantumstate,Kastoryano2011dissipative,Reiter2017dissipative,foldager2021noiseassisted,Cao2021,Leppkangas2023,cubitt2023DQE, ding2023SingleAncilla, kastoryano2023quantum, mi2024stable, chen2023LocalMinima}.
Finally, we note that the NDAR algorithm 
is conceptually related but distinct from 
so-called ``warm start" strategies for quantum approximate optimization \cite{egger2021warm,tate2023bridging, tate2023warm, wurtz2021classically}, 
where the ansatz or initial state are modified based on approximate solutions obtained classically. 
In the case of NDAR, the ansatz is adaptively modified based on the approximate solutions obtained in \emph{quantum} optimization (and based on our knowledge of noise dynamics). 
Finally, we emphasize that NDAR is applicable to a wide variety of quantum and classical algorithms, especially those categorized as Ising machines~\cite{mohseni2022ising}.

\section{Problem Remappings}

A large variety of industrially relevant optimization tasks can be formulated as optimization problems over binary variables~\cite{ausiello2012complexity,hadfield2019quantum}.
For example, numerous important problems~\cite{lucas2014ising,hadfield2021representation} can be directly translated or reduced to quadratic unconstrained optimization problems over 
Boolean variables $x_i=0,1$, or, equivalently, instances of the  
Ising model:
    $\min\sum_{i} h_{i} s_i + \sum_{i<j} J_{ij} s_is_j$
for classical spin variables $s_i=\pm1$, and real coefficients~$h_i,J_{ij}$.
While generalizations to non-Ising cost functions for NDAR are possible, for the rest of the paper we will suppose we are 
given such a problem cost function we wish to minimize, along with a suitable quantum stochastic computational method for sampling from possible problem instance solutions.

For quantum optimization methods, such as those based on QAOA and generalizations, variable values are typically mapped to qubit basis states~$\ket{0},\ket{1}$. 
The cost function is then  mapped~\cite{lucas2014ising,hadfield2021representation} to a diagonal cost Hamiltonian 
\begin{align}\label{eq:cost_hamiltonian}
    H = \sum_{i} h_{i} Z_i + \sum_{i<j} J_{ij} Z_iZ_j +\dots
\end{align}
where $Z_i$ is Pauli $Z$ operator acting on qubit $i$, and $J_{i,j} \in \mathbb{R}$ is interaction strength between qubits $i,j$, while $h_i$ is a local field magnitude.

\paragraph{Bitflip transforms} Consider the unitary ``bitflip" operator $P_{\y} = \bigotimes_{i=0}^{\noq-1}X_{i}^{y_i}$ that acts to flip the $|0\rangle$, $|1\rangle$ basis states as specified by the bitstring $y\in\{0,1\}^\noq$.
This change-of-basis can be applied to $H$ 
by incorporating the change of signs to the weights of $h_i$ and $J_{i,j}$ as 
$H \rightarrow H^{\y}$ with 
\begin{align}\label{eq:gauge_transformation_local}
    H^{\y} 
    &= P_{\y}HP_{\y}\\
    &= \sum_{i} \left(-1\right)^{y_i}h_{i} Z_i 
    + \sum_{i<j} \left(-1\right)^{y_i+y_j}J_{i,j} Z_iZ_j +\dots. \nonumber
\end{align}
This transformation preserves cost Hamiltonian eigenvalues, with eigenvectors (candidate problem solutions) permuted~under $P_{\y}$. 
In particular, under this transformation the $|0\dots0\rangle$ state is mapped to~$|y_0\dots y_{n-1}\rangle$.

\paragraph{Gauge-transformed QAOA circuits}\label{sec:preliminaries:qaoa_symmetries}
Consider a standard QAOA setup, where the quantum circuit 
consists of alternating layers of phase and mixer operators applied to the initial state~$\ket{+}^{\otimes \noq}$, followed by measurement in the computational basis.  
The phase and mixer operators at layer $l$ are given by $U_P(\gamma_l)=exp(-i\gamma_lH)$ and $U_M(\beta_l)=exp(-i\beta_l\sum^\noq_j X_j)$, respectively. The parameters $\gamma_l$ and $\beta_l$ control the algorithm performance and they are assumed to be set by means of a \emph{parameter setting strategy} ~\cite{neira2024benchmarking}.
The physical implementation of QAOA implicitly assumes a particular choice of mapping of the problem variables and $H$. 
In the noiseless setting, QAOA circuits are robust to the choice of basis in the sense that 
bitflip transformations of the cost Hamiltonian $H \rightarrow H^{\y}$ are equivalent to transforming the entire circuit $U \rightarrow U^{\y}$.
This follows easily from the properties $U_P(H^{\y})=P_{\y}U_P(H)P_{\y}$, $P_{\y}U_MP_{\y}=U_M$,
and $P_{\y}\ket{+}^{\otimes \noq}=\ket{+}^{\otimes \noq}$
as shown in Appendix~\ref{app:qaoa_symmetries}. 
Hence all resulting probabilities of measuring particular outcomes (via Born's rule) remain the same, up to a known bitflip transformation, and so cost Hamiltonian expectation values are invariant.  
Nevertheless, in general, implementing 
$U_P(H^{\y})$ gives different unitaries for different $\y$.

\paragraph{Noise breaks bitflip gauge symmetry}

Clearly, we do not expect such bitflip gauge invariance of QAOA to hold in the presence of noise,  
since generally the non-controlled part of the evolution has no a priori reason to respect a given symmetry. 
Consider a simple example of the independent amplitude-damping channel.
In the Kraus operator representation, this corresponds to a probabilistic application of the transition operator $\ketbra{0}{1}$ to 
each qubit. 
Since $\ketbra{0}{1}$ does not commute with bitflips, the QAOA circuit probabilities become dependent on the particular problem mapping and are no longer invariant under bitflip transformations.
Hence QAOA's performance will depend on the particular choice of $\y$ for the cost Hamiltonian $H^{\y}$, even though different $H^{\y}$ all encode the same problem. 
One way of interpreting this example is that applying a bitflip gauge transformation effectively changes the stationary state of the noisy dynamics (with respect to the problem to be solved), and thus can potentially improve the algorithm by aligning the attractor state with high-quality solutions of the cost function.
For example, it is intuitively clear that in the case of amplitude damping, the states featuring low Hamming weights (more 0s than 1s) are favored by noisy evolution.
This suggests that problem mappings for which low-Hamming-weight states have low cost might be beneficial.
Indeed, for example, for the well-studied Maximum Cut problem, the state $\ket{00\dots 0}$ corresponds to the worst possible cut (i.e., the empty set) in the standard formulation. 
If we instead, say, bitflipped the mapping of half the variables, then $\ket{00\dots 0}$ would instead correspond to a cut 
near the random guessing expectation value (i.e., half of the edges), 
a dramatic improvement over zero. 
We exploit these observations in the following sections to improve QAOA runs through NDAR, where we utilize an adaptive approach to account for the noise present in the Ankaa-2 QPU.

\section{Noise-Directed Adaptive Remapping Algorithm}

Inspired by the above observations, we propose the Noise-Directed Adaptive Remapping (NDAR) summarized as Algorithm~\ref{alg:NDAR_main} (and illustrated in Fig.~\ref{fig:ndar_stairs_illustration}).
The NDAR consists of an external loop, in which each iteration involves implementing a stochastic optimization (such as QAOA) for a gauge-transformed Hamiltonian. 
The gauge transformation in each step is chosen based on the results of the optimization in the previous iteration -- it is defined by the best-cost bitstring obtained so far. 
The NDAR  ensures that the (postulated) steady state of the noise in each iteration step is the lowest energy state observed so far (see Fig.~\ref{fig:ndar_stairs_illustration}).
This
adaptively changes the optimized cost function in a way that greedily improves the energy of the attractor state, potentially leading to improved optimization.

In Algorithm~\ref{alg:NDAR_main},  we leave a generic $\texttt{stochastic optimizer}$ that encompasses some black-box optimization subroutine. 
In each NDAR iteration, the $\texttt{stochastic optimizer}$ takes $H^{\y}$ and a number of allowed samples $M$ as an input, and returns a list of $M$ proposed solutions to the problem $H^{\y}$ (represented conventionally as bitstrings).
This can correspond, for example, to a standard QAOA circuit runs coupled with an arbitrary parameter setting strategy.
We also allow a general $\texttt{TERMINATION RULE}$ to denote a stopping condition for the NDAR external loop.
Various heuristics can be used -- for example, terminate if the cost value did not decrease sufficiently relative to the previous iteration steps~\cite{vinci2016optimally}.
Note that in the Algorithm~\ref{alg:NDAR_main}, we use $\ket{0\dots 0}$ as the attractor state. The algorithm can be easily modified for a different attractor which doesn't have to be known a priori but could also be discovered experimentally (for example, by running identity circuits via techniques similar to unitary folding \cite{giurgica2020digital}).

\begin{algorithm}
\caption{\centering Noise-Directed Adaptive Remapping (Greedy version, attractor $\equiv|0\dots0\rangle$)}\label{alg:NDAR_main}
\RaggedRight\textbf{Input}:\\
\RaggedRight $H_{\left(0\right)}$: Hamiltonian encoding the cost function\\ 
 \RaggedRight $\texttt{stochastic optimizer}$: a sub-algorithm that performs a stochastic optimization
 of $H_{\left(l\right)}$ (gauge-transformed Hamiltonian in iteration step $l+1$). 
 Example: standard (noisy) QAOA ansatz.\\
\RaggedRight $M$: total number of samples in each optimization step \\
\RaggedRight $\texttt{TERMINATION RULE}$: a rule specifying termination of the loop. Example: the energy did not decrease relative to the previous iteration.
\vspace{0.1cm}\\
 \RaggedRight\textbf{Pseudocode for
 iteration $j\geq 1$}:\\ \vspace{-5pt}
\begin{enumerate}[wide, labelwidth=0pt, leftmargin=12pt, align=left]
\item Run optimization for Hamiltonian $H_{\left(j-1\right)}$ using $\texttt{stochastic optimizer}$.
The result is $M$ bitstrings (measurement outcomes) $\cbracket{\x^i_{\left(j\right)}}_{i=1}^{M}$.
\item Calculate cost values $\cbracket{E_{\left(j\right)}^{i}= \bra{\x^i_{\left(j\right)}}H_{\left(j-1\right)}\ket{\x^i_{\left(j\right)}}}_{i=1}^{M}$ 
corresponding to the Hamiltonian $H_{\left(j-1\right)}$.
\item Select the bitstring ${\x^{\ast}}$ corresponding to the minimum of the cost function $E_{\left(j\right)}^{\ast}=\min_{i} E_{\left(j\right)}^i$. \\
\textbf{if} \texttt{TERMINATION RULE}:\\ 
\hspace{25pt}\textbf{return} lowest energy solution found overall\\
\textbf{else}:\\
\hspace{25pt}proceed to step 4.
\item
Apply bitflip gauge transform $P_{\x^{\ast}}$ to $H_{(j-1)}$,
obtaining new Hamiltonian $H_{\left(j\right)} \coloneqq H_{\left(j-1\right)}^{\x^{\ast}}$ (cf. Eq.~\eqref{eq:gauge_transformation_local}). 
This ensures that the transformed Hamiltonian 
has the property $\bra{00\dots 0}H_{\left(j\right)}\ket{00\dots 0} = E^{\ast}_{\left(j\right)}$.
Note that $H_{\left(j\right)}$ is equivalent to the original cost Hamiltonian $H_{\left(0\right)}$ via the product of gauge transformations applied so far, and so the solution to the original problem is easily recovered. 
Set $j=j+1$. 
Proceed to step 1.
\end{enumerate}
\end{algorithm}

\begin{figure*}[!t]
    \centering
\includegraphics[width=0.98\textwidth]{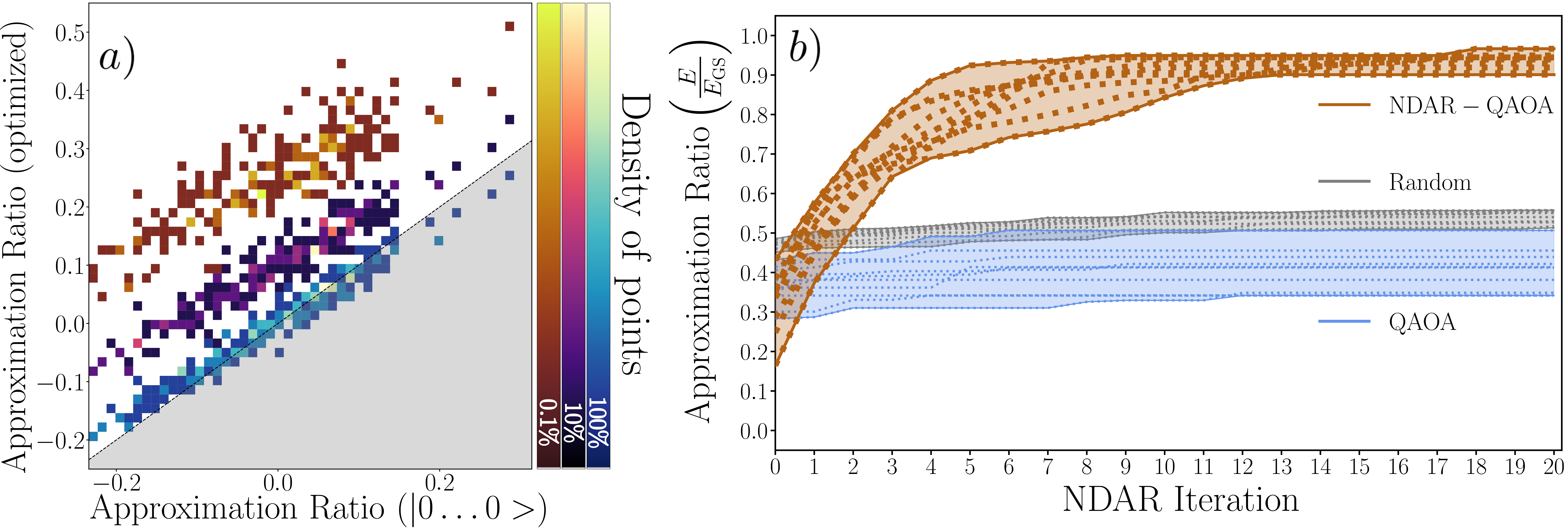}
\caption{
Investigation of the effects of applying
bitflip gauges to improve an $n=82$-qubit $p=1$ QAOA in experiments on Rigetti's Ankaa-2 superconducting quantum device.
a) Correlation between the approximation ratio (AR) of the attractor state $\ket{0\dots0}$ (X-axis) and the AR attained via QAOA (Y-axis) using transformed Hamiltonian.
Differently colored regions of the histogram correspond to the AR estimators constructed from different percentiles of the AR  distribution.
We computed Pearson’s correlation $r_q$ and Spearman’s rank correlation $\rho_q$ coefficients at each percentile $q$.
We obtained
$r_{0.1\%}=0.87^{+0.03}_{-0.04}$, $r_{10\%}=0.964^{+0.009}_{-0.012}$, and $r_{100\%}=0.984^{+0.004}_{-0.005}$; and
$\rho_{0.1\%}=0.86^{+0.04}_{-0.05}$, $\rho_{10\%}=0.96^{+0.01}_{-0.02}$, and $\rho_{100\%}=0.981^{+0.006}_{-0.007}$.
Confidence intervals are 95\% bias-corrected and accelerated ($BC_{a}$) paired-bootstrap intervals~\cite{efron1994introductionBootstrap} with $10^5$ resamples, calculated using the standard \texttt{scipy.stats} module \cite{scipy}. Two-sided tests yielded p-values $<10^{-58}$ for all cases.
Taken together, these results indicate a strong, nearly linear association between the AR of the $\ket{0\cdots 0}$ state and the quality of optimization.
The shaded region corresponds to the $y\leq x$ where the optimized AR is lower than the AR of the attractor state (indicating no performance gain).
Data points are aggregated for $10$ Hamiltonian instances.
b) Overview of Noise-Directed Adaptive Remapping algorithm (see Algorithm~\ref{alg:NDAR_main}) performance as a function of iterations (X-axis).   
The Y-axis shows the AR of the best-found solution (out of $1000*150*(i+1)$, with $i$ being the iteration index). 
Each dotted line corresponds to a different random Hamiltonian instance, solid lines indicate the maximal and minimal performance across $10$ instances.  
Besides best-found solutions obtained via NDAR applied to QAOA (orange), we plot standard QAOA optimization (light blue), and a baseline that corresponds to uniform random sampling (grey).
Since in the case of QAOA and random baseline the protocols are not iterative, the iteration index corresponds to increasing the number of cost function evaluations/gathered samples to match those of NDAR.
}\label{fig:exp:NDAR}
\end{figure*}

\section{Experimental demonstration}

In this section, we present the results of experiments performed on 82-qubit subsystems of Rigetti's superconducting transmon device Ankaa-2. 
Technical details of implementation are provided in Appendix~\ref{app:experiments_details}.
Furthermore, in Appendix~\ref{app:simulations_details} we report numerical demonstrations of NDAR (with simulated amplitude damping) for small system sizes.

We start by empirically investigating how the choice of the bitflip gauge affects the results of variational optimization.
We define the approximation ratio (AR) as the value of the cost function applied on a specific output (i.e., a bitstring) of the QAOA circuit after measurement in the computational basis, divided by the best-known cost of the problem ($E_{GS}$). Multiple runs result in multiple outputs which define an AR distribution.

We conjecture that QAOA optimization will perform better for bitflip gauges that transform a Hamiltonian in such a way that the approximation ratio (AR) of the attractor state $\bra{0\dots 0}H^{\y}\ket{0,\dots 0}/{E_{GS}}$ (or equivalently $\bra{\y}H\ket{\y}/{E_{GS}}$) is higher. 
Hence we first investigate a correlation between said AR and the performance of the QAOA on the cost Hamiltonian after running the $\y$-transformed circuit with the chosen parameter setting strategy.
The Hamiltonian cost function we optimize corresponds to fully connected, weighted graphs 
\begin{align}\label{eq:SK_hamiltonian}
    H = \sum_{i<j} J_{i, j} Z_iZ_j ,
\end{align}
with 2-body interaction coefficients $J_{i,j}$ drawn uniformly at random from the bimodal distribution of values $\pm1$ (so-called Sherrington-Kirkpatrick (SK) model \cite{sherrington1975solvable}).

We implement $p=1$ QAOA optimization for $n=82$ qubits with Tree-Structured Parzen Estimators (TPE) \cite{hyperopt2013,optuna2019} parameter setting strategy using a budget of $t=100$ cost function evaluations (trials).
Each trial corresponds to estimating the cost function value from $s=1000$ samples at fixed parameters.
The QAOA is implemented for a set of 10 distinct Hamiltonian cost functions. 
For each Hamiltonian instance, the optimization is run for $20$ random bitflip gauges $P_{\y}$.

For each QAOA run, we collect the distribution of ARs and compare the result with $\bra{\y}H\ket{\y}/{E_{GS}}$.
We construct the estimators from three quantiles of the total distribution over samples (for the best trial), corresponding to averaging only a subset of all AR outputs~\cite{ronnow2014defining} after sorting them in descending order ($1$, $100$, and $1000$ samples, which constitute the top $0.1\%$, $10\%$, and $100\%$ of all of the output samples for the best trial, respectively).
We aggregate the data for all test Hamiltonians and plot the results as 2-dimensional histograms on the left side of Fig~\ref{fig:exp:NDAR}. 

We observe a clear positive correlation between the AR of the optimized distribution and the AR of the attractor state evaluated on the cost Hamiltonian for all quantiles, as confirmed by the estimated values of Pearson's and Spearman's correlation coefficients~\cite{zwillinger1999crc} close to $1.0$, reported in the Figure's caption.
Importantly, there is a clear separation between the ARs of best-found solutions (top $0.1\%$) and the ARs of the attractor state, as seen by the fact that all data points lie above the shadowed $x\leq y$ region. 
This variance suggests that choosing a proper bitflip gauge and collecting sufficient samples can indeed aid the algorithm performance in experiments.

We exploit this observation further by implementing the Noise-Directed Adaptive Remapping algorithm applied to find solutions to the same class of $n=82$-qubit Hamiltonians with the same ansatz ($p=1$ QAOA).
Each step of the NDAR loop entails implementing TPE parameter search using $t=150$ trials, with $s=1000$ samples each. 
Following Ref.~\cite{maciejewski2023design}, we additionally allow the TPE loop to optimize over categorical variable controlling the gates' ordering of the ansatz (see Appendix~\ref{app:experiments_details} for details).
We specify $\texttt{TERMINATION RULE}$ for NDAR (recall Algorithm~\ref{alg:NDAR_main}) that terminates the loop if neither the minimal found energy nor the cost function (i.e., mean energy estimator) decreases compared to the previous step. 

The results are presented on the right side of Fig.~\ref{fig:exp:NDAR}.
The data points correspond to the best-found solution, meaning the best bitstring out of $150*1000*(i+1)$ samples ($i$ being iteration step counting from $0$). 
The NDAR QAOA results from QPU are colored orange.
We also implement the standard QAOA (light blue) with the same number of function calls as NDAR. 
For reference, we provide the best approximation ratios obtained from the same number of shots using uniformly random sampling (grey).

We observe that Noise-Directed Adaptive Remapping improves the best-found solutions from range $\approx \left[0.34,0.51\right]$ (variation between Hamiltonians for standard QAOA, blue lines) to $\approx \left[0.9, 0.96\right]$.
In particular, without NDAR, standard QAOA (light blue) performs worse than random sampling (grey) when we consider the best sample, while NDAR (orange) allows us to outperform both the standard QAOA and random sampling already after 3 iterations.
Importantly, the NDAR algorithm with QAOA converged after at most $20$ optimization runs (achieving high-quality results much earlier), suggesting that the overhead from using NDAR is not prohibitively high for $n=82$ qubits. Note that we did not fine-tune the meta-parameters of the TPE loop, and it is likely that both performance and convergence can be improved.

\begin{figure*}[!t]
    \centering
\includegraphics[width=0.98\textwidth]{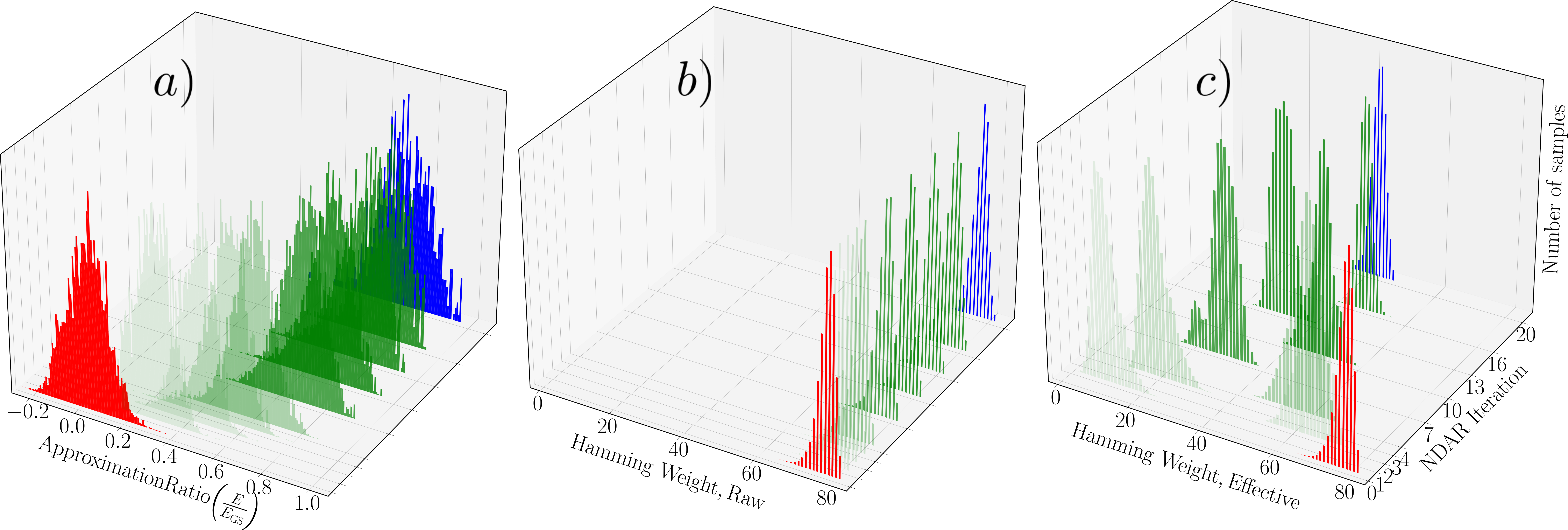}
\caption{\label{fig:exp:NDAR_histograms}
Evolution of the samples' distributions properties as a function of NDAR iteration.
The Y-axis is NDAR iteration, and the Z-axis represents a (normalized) number of occurrences.
The X-axis corresponds to the a) Approximation Ratio, b) the physically recorded, raw Hamming weight, and c) the effective Hamming weight attained due to NDAR bitstrings relabeling.
First and last iterations are colored red and blue, respectively; while intermediate iterations are graduated green.
The plots show aggregated data for $10$ Hamiltonians.
For each Hamiltonian, there is a total of $1000$ samples shown for each iteration (taken from the optimized trials from the same data as in Fig.~\ref{fig:exp:NDAR}b).
}
\end{figure*}
In Fig.~\ref{fig:exp:NDAR_histograms}a, we present a 3-dimensional visualization of how the AR distributions evolve as a function of NDAR iteration (aggregating all instances, differently from Fig.~\ref{fig:ndar_stairs_illustration}).  
We observe how the distributions converge towards higher approximation ratio regions, following the increasing quality of the solution corresponding to the attractor state.
One of the important effects of noise on QAOA optimization is that it highly contracts the attainable search space (for example, in the case of strong amplitude damping, this corresponds to restricting the search to only low hamming weights basis states).
At the same time, we expect that NDAR, due to the relabeling of basis elements, will effectively expand that space (by effectively remapping the Hamming sectors), allowing to concentrate the search in regions where better solutions lie.

To investigate that effect, in Figs.~\ref{fig:exp:NDAR_histograms}b and ~\ref{fig:exp:NDAR_histograms}c, we plot an evolution of two types of Hamming weight (HW) distributions.
The first type (middle plot) corresponds to physically registered measurement outcomes ('raw' data) that are highly concentrated close to an attractor state (which experimentally appears to be closer to $\ket{1\dots 1}$ rather than $\ket{0 \dots 0}$, a fact we attribute to possible overheating of the chip, see also Appendix~\ref{app:experiments_details:optimization}).
This demonstrates how physical noise restricts the search space and explains low approximation ratios attained in Fig.~\ref{fig:exp:NDAR}b, even if QAOA (without NDAR) is given many cost function evaluations.
The second type (right plot) corresponds to the \emph{effective} (or remapped) Hamming weight that results from gauge transformations applied to the measurement basis elements. 
The samples distributions are initially concentrated close to very high HW (initial optimization, iteration $0$, red), and then exhibit oscillatory behavior with increasing iterations, to finally converge around $\frac{n}{2}=41$ Hamming weights (last iteration, blue).
Note that for generic random SK Hamiltonians, we expect the ground states to have Hamming weight close to $\frac{n}{2}$, thus the NDAR convergence to that region is a desirable behavior. 

\section{Conclusions and future work}

We introduced Noise-Directed Adaptive Remapping (NDAR), a new algorithmic framework for quantum approximate optimization in the presence of certain types of noise.
NDAR iteratively changes the encoding of the target Hamiltonian in a way that adaptively attempts to align the quantum evolution with the noise dynamics.
We believe that the Noise-Directed Adaptive Remapping algorithm has the potential to become a standard method in the general toolbox of quantum optimization in the non-fault-tolerant regime, and, more generally, a technique to exploit the interplay of symmetries and open-system dynamics to improve algorithm performance.
In particular, as many superconducting, photonic, and electronic quantum information processing platforms are affected by amplitude-damping-like noise, we expect NDAR to be particularly useful for optimization implemented in those devices (including quantum annealers and analog simulators).
At the same time, it is important to emphasize the limitations of our method. It relies heavily on assumptions about the noise model—specifically, that it features a classical attractor state. Such a model does not typically apply to certain platforms, such as ion traps or neutral atoms, which can make the vanilla NDAR impractical. An interesting open question is how to generalize NDAR (for example, by using more general gauge transformations) so that it can be applied to such hardware. Alternatively, one could artificially introduce noise into the protocol to enforce additional bias, thereby allowing NDAR to exploit knowledge of previously found solutions (see Ref.~\cite{tam2025enhancingNDAR} for a promising study in this direction).

In the previous sections, we empirically demonstrated the effectiveness of NDAR in a proof-of-concept $p=1$ standard QAOA optimization on Rigetti's $82$-qubit QPU,
observing large performance gains over vanilla QAOA without NDAR. It is worth stressing that those gains are \emph{not} due to noise-mitigation effects (as we are not getting closer to the ideal QAOA), but apparently come from the feedback-loop structure of the protocol -- in fact, we suspect that noiseless QAOA with $p=1$ could perform worse than the presented NDAR implementation. 
While the presented experimental results seem promising, more analytical and numerical work is required to better understand the general applicability of our method, as well as its dependence on the noise type and strength.
Moreover, our study unveils numerous avenues for further research, a few of which we now outline.

First, we note that the proposed formulation of NDAR is intentionally very basic, and many heuristics can be applied to extend it.
For example, the outer NDAR loop is currently greedy, meaning we always choose the new bitflip gauge based on the best-found solution.
This has clear downsides, as follows from studies of classical greedy algorithms.
It is possible that more flexible strategies would be beneficial -- such as allowing for optimization over multiple low-energy gauges, performing the choice of the bitflip gauge based on its Hamming weight or typicality, or performing an additional local (classical) search in the neighborhood of the best-found solutions.

Second, we tailored the presentation on the Ising model, which is very general in its utility and has the convenience of supporting an easy remapping of the Hamiltonian via a simple change of the signs of its coefficients. However, similar simple transformations exist in other more complex models (such as the q-state Potts model that can be related to constrained optimization of the coloring type~\cite{inaba2022potts}). In general, NDAR could be formulated for polynomial cost functions of integer variables if an appropriate basis is defined, which could also include qudits encodings~\cite{bravyi2022hybrid,ozguler2022numerical}. The study of this rich space of mapping transformations and Hamiltonian encodings is likely a very exciting route of investigation.

Third, we believe it would be interesting to implement NDAR with different algorithms than QAOA. For example, NDAR might be conceptually related to recently studied dissipative quantum algorithms that aim to mimic the cooling processes in Nature to find low-energy states, see Refs.~\cite{cubitt2023DQE, ding2023SingleAncilla, kastoryano2023quantum, mi2024stable, chen2023LocalMinima}.
Bath-engineering techniques, which have shown promise to avoid the barren plateau problem, might also be coupled with our remappings~\cite{sannia2023engineered}.
Moreover, experimental tests with different types of unitary ansätze (including those that are not necessarily bitflip-gauge invariant, such as Quantum Mixer-Phaser Ansatz \cite{larose2022mixer}) and optimization algorithms are warranted \cite{hadfield2019quantum, Zhu2022AdaptQAOAOriginal,herrman2022multi,Wurtz2022counteradiabaticity,magann2022feedback, Liu2022layerVQE, maciejewski2023design, blekos2306review, wilkie2024qaoa,Shaydulin2024evidence,vijendran2024expressive}.
In particular, it would be intriguing to combine NDAR with methods such as ADAPT-QAOA \cite{Zhu2022AdaptQAOAOriginal}; with iterative/recursive versions of QAOA \cite{bravyi2020obstacles,dupont2023quantum,brady2023iterative};; with multilevel/multigrid quantum optimization strategies \cite{ushijima2021multilevel,angone2023hybrid,maciejewski2024multilevel,bach2024mlqaoa,acharya2024decomposition,bach2025solving};
with alternative qubit-efficient mappings/encodings~\cite{sundar2024qubit,sciorilli2025towards,sciorilli2025competitive};
 with warm-start techniques~\cite{egger2021warm,tate2023bridging, tate2023warm, wurtz2021classically} or with recently-proposed quantum optimization schemes that generate candidate solutions via expectation-values statistics~\cite{Dupont2024RelaxAndRound,dupont2025optimization}.

Finally, we should note that NDAR idea could be applied to a variety of non-quantum Ising machines~\cite{mohseni2022ising}, including digital algorithms, where the noise can be present because of hardware design, or could be added as a functional driver to exploit the attractor effect.

\section*{Code availability}
GitHub repository containing the Python code for implementation of NDAR is available under \href{https://github.com/usra-riacs/quantum-approximate-optimization}{github.com/usra-riacs/quantum-approximate-optimization}~\cite{quapopt_repo}.

\section*{Acknowledgments}
This work was supported by the Defense Advanced Research Projects Agency (DARPA) under Agreement No. HR00112090058 and IAA8839, Annex 114. 
Authors acknowledge support under NASA Academic Mission Services under contract No. NNA16BD14C.
We thank Maxime Dupont, Bram Evert, Mark Hodson, Stephen Jeffrey, Aaron Lott, Salvatore Mandrà, and Bhuvanesh Sundar for fruitful discussions at various stages of this project.

\bibliographystyle{quantum}
\bibliography{bib}

\clearpage

\onecolumngrid
\appendix
\begin{center}
{\large Supplementary Materials for\\ Improving Quantum Approximate Optimization\\ by Noise-Directed Adaptive Remapping}
\end{center}

\section{Gauge transformations of QAOA circuits}\label{app:qaoa_symmetries}

Recall an $n$-qubit QAOA circuit of depth $p$ and parameters $\mathbf{\gamma},\mathbf{\beta}\in\reals^p$ is given by
\begin{align}\label{eq:generic_qaoa_circuit}
U\left(H;\mathbf{\gamma},\mathbf{\beta}\right) := \prod_{l=p}^{1} U_{M}\left(\beta_l\right)U_{P}\left(\gamma_l\right), 
\end{align}
with phase separator 
\begin{align}
    U_{P}\left(\gamma_l\right)  := U_{P}\left(H;\gamma_l\right) = \exp\left(-i\gamma_l H\right) 
\end{align}
and mixer operator
\begin{align}
    U_{M} \left(\beta_k\right) := \bigotimes_{j=0}^{n-1} \exp\left(-i\beta_k X_j\right).
\end{align}
The QAOA operator $U(H;\mathbf{\gamma},\mathbf{\beta})$ is applied to the standard initial state 
$\ket{s}=\ket{+}^{\otimes n}$, where $\ket{+} = \frac{1}{\sqrt{2}}\left(\ket{0}+\ket{1}\right)$.
We now formalize the observation that  
bitflip gauge transformations of the input cost Hamiltonian do not affect QAOA performance in the ideal setting. The same result does not generally apply to other ans\"atze. Let $Q_{\x}:=\ket{\x}\bra{\x}$ such that 
$\bra{\psi} Q_{\x} \ket{\psi}=\langle Q_{\x}\rangle_\psi$ gives the probability of measuring bitstring $\x$ for any quantum state $\ket{\psi}$.
In what follows, we denote $Q_{\x^{\y}} =\ket{\x^\y}\bra{\x^\y} \coloneqq P_{\y} \ketbra{\x}{\x} P_{\y}$.

\begin{lem}
    Bitflip (spin-reversal) gauge transformations $H\rightarrow H^{\y}$ do not affect the performance of QAOA in the noise-free setting, i.e. for any fixed $p,\gamma,\beta$ and $\x,\y\in\{0,1\}^n$ we have 
$$\langle Q_{\x^{\y}}\rangle_{U(H^{\y};\gamma,\beta)\ket{s}}=\langle Q_{\x} \rangle_{U(H;\gamma,\beta)\ket{s}},$$
i.e. the probability of measuring the transformed bitstring $\x^{\y}$ is the same as that of measuring $\x$ originally, 
  and hence
    $$\langle H^{\y}\rangle_{U(H^{\y};\gamma,\beta)\ket{s}}=\langle H \rangle_{U(H;\gamma,\beta)\ket{s}}.$$
\end{lem}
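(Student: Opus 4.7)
The plan is to show the stronger operator-level identity
\[
U(H^{\y};\gamma,\beta)\ket{s} \;=\; P_{\y}\, U(H;\gamma,\beta)\ket{s},
\]
from which both claims of the lemma will fall out immediately. The three ingredients we need have already been stated in the preamble: $U_P(H^{\y};\gamma_l)=P_{\y}U_P(H;\gamma_l)P_{\y}$, $[P_{\y},U_M(\beta_l)]=0$ (so $P_{\y}U_M P_{\y}=U_M$), and $P_{\y}\ket{s}=\ket{s}$. The only other fact used is the involutivity $P_{\y}^{2}=\eye$, which holds because $P_{\y}$ is a tensor product of Pauli-$X$'s.

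To derive the identity I would expand $U(H^{\y};\gamma,\beta)=\prod_{l=p}^{1}U_{M}(\beta_l)U_{P}(H^{\y};\gamma_l)$, substitute the phase-separator conjugation, and commute each $P_{\y}$ past its neighboring mixer. This moves all the $P_{\y}$'s outward into a telescoping product in which adjacent pairs cancel by $P_{\y}^{2}=\eye$, leaving a single $P_{\y}$ on the far left and one $P_{\y}$ on the far right of the original QAOA unitary $U(H;\gamma,\beta)$. Acting on $\ket{s}$, the rightmost $P_{\y}$ is absorbed by $P_{\y}\ket{s}=\ket{s}$, which yields the claimed identity. This is really the entire content of the lemma and the only place where one has to be careful with bookkeeping; the rest is a one-line consequence.

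For the first equality of the lemma, recall $Q_{\x^{\y}}=P_{\y}Q_{\x}P_{\y}$ by definition, so using the identity above,
\[
\langle Q_{\x^{\y}}\rangle_{U(H^{\y};\gamma,\beta)\ket{s}}
=\bra{s}U^{\dagger}(H;\gamma,\beta)P_{\y}(P_{\y}Q_{\x}P_{\y})P_{\y}U(H;\gamma,\beta)\ket{s}
=\langle Q_{\x}\rangle_{U(H;\gamma,\beta)\ket{s}},
\]
using $P_{\y}^{2}=\eye$. For the Hamiltonian expectation value, I would plug in the spectral decomposition $H=\sum_{\x}E_{\x}\ket{\x}\!\bra{\x}$ so that $H^{\y}=\sum_{\x}E_{\x}\ketbra{\x^{\y}}{\x^{\y}}=\sum_{\x}E_{\x}Q_{\x^{\y}}$, and then apply the probability identity term by term. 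The sum over $\x$ is preserved because $\x\mapsto\x^{\y}$ is a bijection on $\{0,1\}^{\noq}$. No step is genuinely difficult; the only ``obstacle'' is to keep the order of the conjugations straight while collapsing the telescoping product, and to note that the mixer/$P_{\y}$ commutation uses nothing more than $[X_i^{y_i},X_j]=0$ for all $i,j$.
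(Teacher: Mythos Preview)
Your proposal is correct and follows essentially the same route as the paper's own proof: both establish $U(H^{\y};\gamma,\beta)=P_{\y}U(H;\gamma,\beta)P_{\y}$ from the three ingredients you list, then absorb the right $P_{\y}$ into $\ket{s}$ and conjugate the measurement projector. The only cosmetic difference is that the paper writes out the $p=1$ case and asserts the general $p$ case ``follows similarly,'' whereas you carry out the telescoping for arbitrary $p$ explicitly and deduce the energy identity from the spectral decomposition rather than from the amplitude-level statement.
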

\begin{proof}
Here we take advantage of the useful properties $P_{\y}^\dagger = P_{\y}$, $P_{\y}^2=I$, $P_{\y}X_jP_{\y}=X_j$, and $P_{\y}\ket{s}=\ket{s}$, for any  $\y\in\{0,1\}^n$. 
Let $p=1$; the arbitrary $p$ case follows similarly. Fix $\mathbf{\gamma},\mathbf{\beta}$ and observe that 
    $$U(H^{\y})=e^{-i\beta_l B} e^{-i\gamma_l P_{y}H P_{y}}
    =e^{-i\beta_l B}P_{y} e^{-i\gamma_l H }P_{y}
    =P_{y}e^{-i\beta_l B} e^{-i\gamma_l H }P_{y}
    =P_{y}U(H)P_{y},$$
and so applying the properties above gives
$$\bra{\x} U(H^\y)\ket{s}= \bra{\x} P_\y U(H)P_\y\ket{s}=\bra{\x^\y} U(H)\ket{s},$$
i.e., quantum amplitudes are preserved which implies measurement probabilities are too. 
\end{proof}
We remark that related techniques are applied to obtain results concerning the interplay of QAOA and classical cost function symmetries in \cite{shaydulin2021classical}. In the setting of that work we are given a transformation $A$ which commutes with a particular cost Hamiltonian $H$, i.e. $AH-HA=0$, and the affect on QAOA dynamics and outcomes is studied. In our distinct setting $P_{\y}$ and $H$ do not  
commute as this would imply $H^{\y}=H$.

The lemma shows that on an ideal quantum device, gauge transformations would have no effect on QAOA performance.  
In contrast, the proofs breaks down in the presence of noise. 

\begin{rem}\label{rem:gauges_change_of_basis}
Note that the above is mathematically equivalent to applying basis transformation defined by $P_{\y}$ to every unitary in the circuit, but is operationally different from it -- change of basis would correspond to applying $P_{\y}$ to every unitary, as opposed to just phase separators.
The equivalence between running QAOA for $H^{\y}$ and performing a global change of basis does not hold in the same way for arbitrary quantum algorithms. For other ans\"atze that similarly do not depend on the choice of bitflip gauge, the algorithm parameters may have to transform also if, for example, the mixer of a more general QAOA circuit~\cite{hadfield2019quantum} does not commute with the particular $P_{\y}$. 
Other cases of parameterized quantum circuits may depend on the particular ordering (permutation) of variables in the chosen encoding, such as 
Quantum Alternating Mixer-Phaser Ansatz \cite{larose2022mixer}. 
Clearly, NDAR can be applied in such scenarios as well. 
In particular, one can imagine a black-box algorithm where the circuit does not depend on the target problem.
Then the NDAR feedback loop would not modify the circuit at all, but only the measured gauge-transformed Hamiltonian.
\end{rem}

\section{Numerical studies}\label{app:simulations_details}

\begin{figure*}[!h]
    \centering
\includegraphics[width=0.55\textwidth]{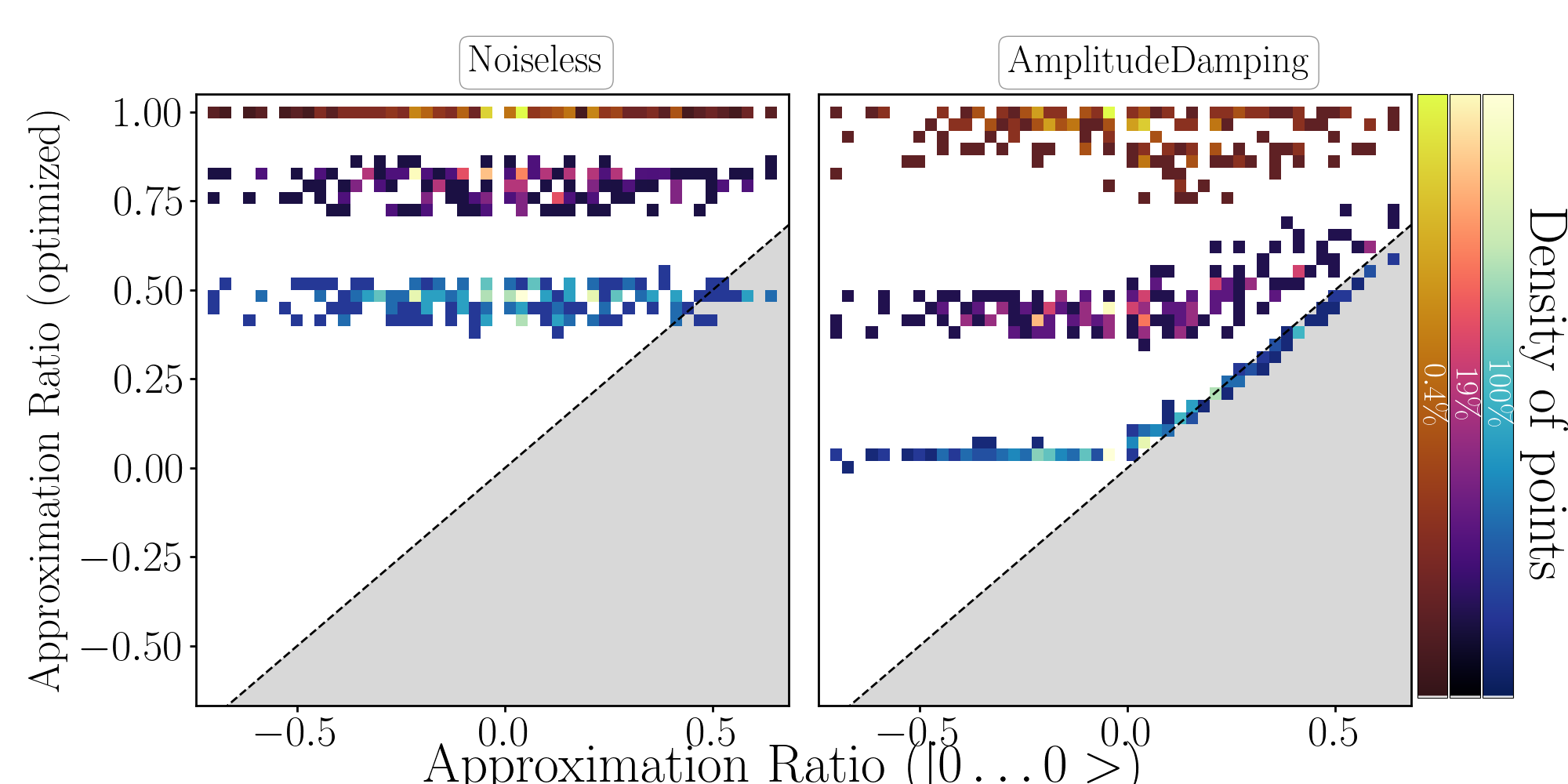}
\includegraphics[width=0.43\textwidth]{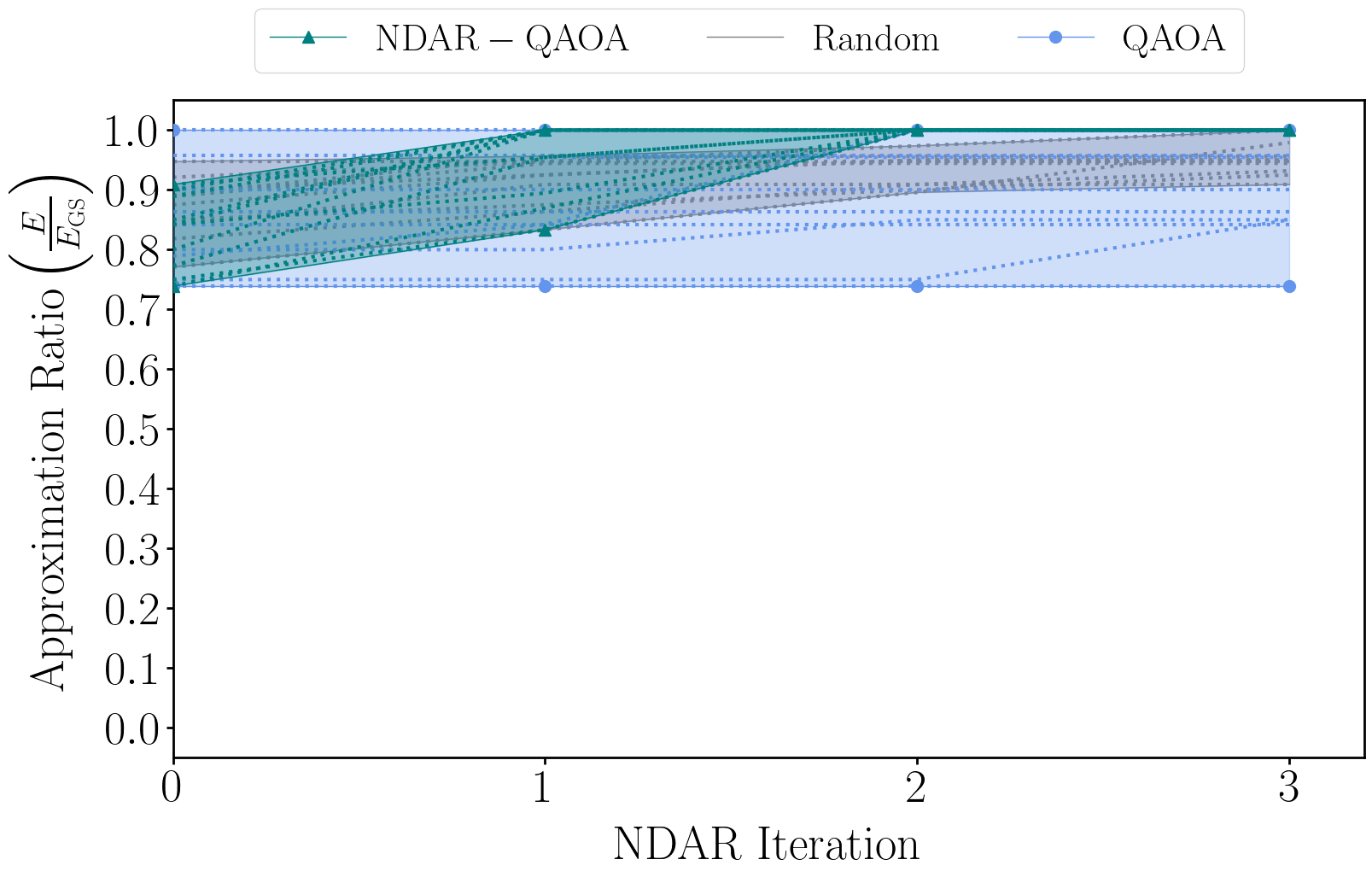}
\caption{
Investigation of the effects of applying bitflip gauges to improve an $n=16$-qubit variational optimization with $p=1$ QAOA in simulations.
The data presented is of the same type as Fig.~\ref{fig:exp:NDAR} in the main text, see description of that figure for details.
} \label{fig:sim:app:NDAR}
\end{figure*}

Recall that in Fig~\ref{fig:exp:NDAR} of the main text, we presented the results of two experiments on the Ankaa-2 device. 
The left side of Fig~\ref{fig:exp:NDAR}a presents a correlation investigation between the approximation ratio of the postulated attractor state (denoted as $\ket{0\dots 0}$) and QAOA's performance, while the right side of Fig.~\ref{fig:exp:NDAR} shows the overview of Noise-Directed Adaptive Remapping algorithm implementation.

We now consider the results of an analogous investigation of what we have presented in Fig.~\ref{fig:exp:NDAR} for simulated systems.
In our simulations, we use non-identical, uncorrelated strong amplitude-damping noise. 
A tensor-product, local two-qubit channel is applied to the circuit after every application of every gate. 
The 2-qubit gates are set to be more noisy than the 1-qubit gates.
We assume linear connectivity of a device and implement an efficient SWAP network from \cite{hirata2011} (same as in experiments in the main text).
This significantly increases circuit depth, making the simulations closer to what happens in experiments.
We assume that the native gate set consists of arbitrary ZZ and XY rotations.
All simulations are performed using Qsim framework \cite{qsim} that implements the quantum trajectories method \cite{isakov2021simulations} to approximate sampling from noisy circuits.

We implement $p=1$ QAOA optimization for $n=16$ qubits for the same class of Hamiltonians as in the main text (SK model).
The QAOA is implemented for a set of 10 random Hamiltonians. 
The aggregated results are presented in Fig.~\ref{fig:sim:app:NDAR}.

For the correlations plot (left side of Fig.~\ref{fig:sim:app:NDAR}), for each Hamiltonian, the optimization is run independently for $20$ random bitflip gauges, i.e., at a given optimization, the original Hamiltonian is gauge-transformed via a random bitfip $P_{\y}$ and such process is repeated independently for $20$   random $\y$s. 
Importantly, the optimizer seed is fixed in all simulations, meaning that the only sources of performance difference should be statistical noise, the gauge transformation, and the 
the varying hardness of optimization of particular Hamiltonian instances.
For each optimization run, we estimate the approximation ratio of the final output and compare the result with the corresponding approximation ratio.
The optimization is run with $t=100$ cost function evaluations, each involving gathering $s=256$ samples.
For reference, we plot also the results of noiseless optimization.

The plot on the left of Fig.~\ref{fig:sim:app:NDAR} shows, for the noiseless setting, no correlation between the AR of the attractor state and QAOA's performance.
This is expected since without noise all gauges are equivalent (recall discussion in Appendix~\ref{app:qaoa_symmetries}).
In the case of amplitude damping, however, we observe a positive correlation for mid-quantiles ($256$ best samples, and $50$ best samples, corresponding to $100\%$ and $\simeq20\%$ of all samples).
Interestingly, for poor gauge choices (AR below $0.0$), the performance is mostly flat for those quantiles, indicating that the optimizer gets stuck.
For the best-found solution (top $0.4\%$ of results), we do not observe a correlation between the two variables.
We expect this to be a small-size effect ($n=16$).
Indeed, note that the actual ground state (approximation ratio $1$) is found in multiple cases regardless of the gauge.

Due to the above observations, we decided to highly reduce the number of trials and samples for the NDAR implementation -- otherwise the NDAR would often converge at the first iteration (as indicated by results in Fig~\ref{fig:sim:app:NDAR}a), rendering the results uninteresting. 
The plot on the right of Fig.~\ref{fig:sim:app:NDAR} shows the performance overview of the Noise-Directed Adaptive Remapping algorithm applied to noisy simulated QAOA.
The QAOA is guided by the TPE optimizer with $t=20$ cost function evaluations, each involving gathering $s=100$ samples.
We observe that the NDAR applied to noisy QAOA (orange) finds the ground state in all tested cases after 3 iterations.
Results of standard QAOA (light blue) show that in that case, the performance is spread out across Hamiltonians, indicating that for some instances, the strong amplitude damping noise prohibits finding the ground state.
The reference random uniform baseline (grey), given the same small number of samples, achieves comparable performance. 
The above results indicate that NDAR can aid the noisy QAOA even when the computational resources (function calls and samples) are highly limited.

\section{Additional experimental data}\label{app:experiments_details}

\subsection{QAOA circuit implementation}\label{app:experiments_details:gates}

\begin{figure*}[!h]
    \centering
\includegraphics[width=0.59\textwidth]{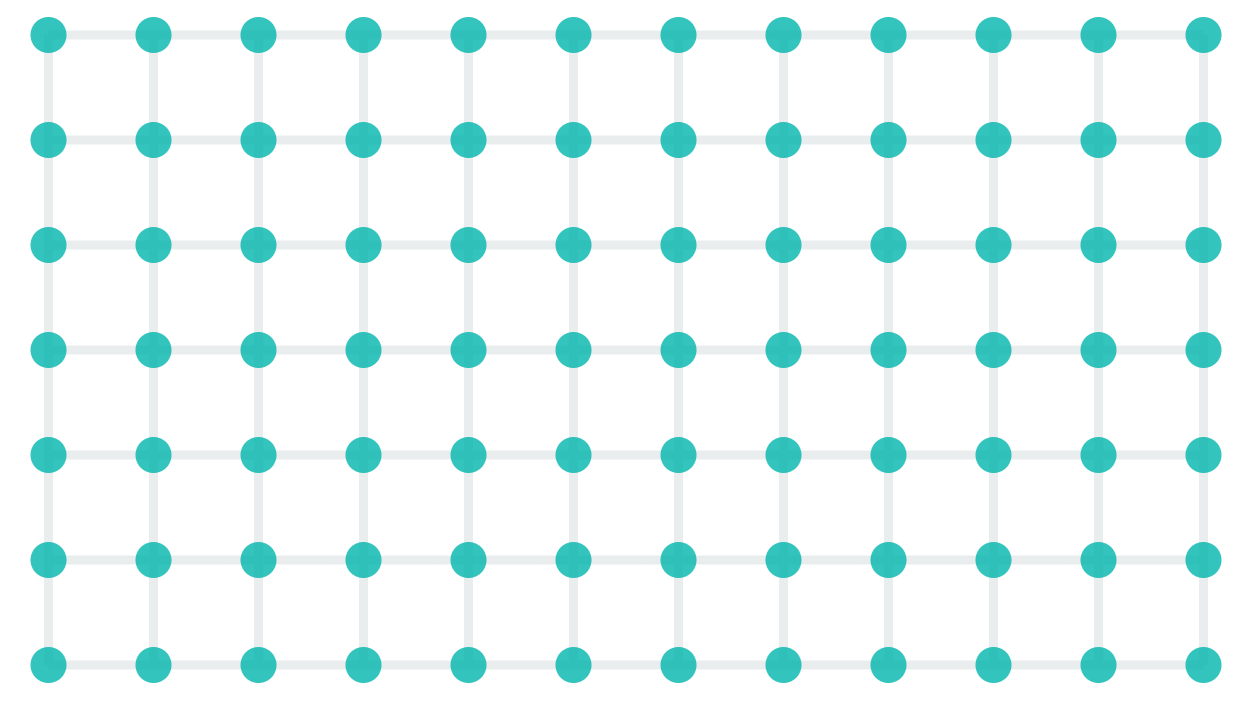}
\caption{\label{fig:exp:app:Anka}
Schematics of the Rigetti's Ankaa-2 superconducting transmon chip topology with respect to 2-qubit iSWAP operation. 
Missing edges not shown.
Source: rigetti.com.}
\end{figure*}

The Rigetti's Ankaa-2 (Fig.~\ref{fig:exp:app:Anka}) device's native gate set consists of iSWAP, $\pm \frac{\pi}{2}$ RX rotations and arbitrary  RZ rotations.
The device's connectivity (w.r.t. iSWAP gate) is that of a rectangular 7 by 12 grid (with some edges missing, not shown in Fig.~\ref{fig:exp:app:Anka}).
To implement QAOA circuits for fully connected Hamiltonians, we implement an efficient SWAP network from Ref.~\cite{hirata2011} on linear chains of qubits.
Since the phase separator (the $\exp\left(-i\gamma H\right)$ operator) of the QAOA ansatz implemented with SWAP network has a structure of multiple RZZ rotations followed by SWAP gates, we compile the product of the two gates jointly (a similar trick was applied for efficient compilation in Ref.~\cite{maciejewski2023design}).
Each implementation of RZZ+SWAP requires applying 3 ISWAP gates and at most 4 RX$\left(\pm\frac{\pi}{2}\right)$ and 5 RZ rotations (the exact number depends on the qubit).
Implementing a single layer of QAOA ansatz for $\noq$ qubits requires consecutive application of $\noq$ linear chains, each with $\approx \frac{\noq}{2}$ RZZ+SWAP gates.
Thus, for the biggest experiments shown in the text with $\noq = 82$, the $p=1$ QAOA ansatz consists of around $\approx 10,000$ ISWAP gates and $\approx 30,000$ single-qubit rotations.

\subsection{Parameter setting strategy  details}\label{app:experiments_details:optimization}

To implement variational optimization (for Figs.~\ref{fig:ndar_stairs_illustration}, \ref{fig:exp:NDAR}, and \ref{fig:exp:NDAR_histograms}) we use the Tree-Structured Parzen Estimators (TPE) implementation from the \texttt{optuna} package \cite{optuna2019}.
We use default hyperparameters (but with turned-off pruning trials). 
Each optimization was run with a random seed, and each optimization started from the same initial points (all angles set to $0.1$).

As indicated in the main text, besides 2 variational angles (for $p=1$ QAOA), we allow the optimizer to also vary between 10 random gates' orderings (GOs) of the ansatz (specified as categorical variables).
In the noiseless scenario, allowed GOs do not affect the circuit.
However, in real-life settings, it was demonstrated, for example in Ref.~\cite{maciejewski2023design}, that due to the noise, the optimization over GOs can highly improve performance.
When implementing the Noise-Directed Adaptive Remapping algorithm, to allow more versatility, we change the allowed set of GOs at random in each loop step.
To make a fair comparison when implementing standard QAOA (shown in Fig.~\ref{fig:exp:NDAR}b), we allow the optimizer to choose over the number of GOs equal to $10*i^{\max}$, where $i^{\max}$ is iteration index at which NDAR converged (for the corresponding Hamiltonian instance). 
This gives the TPE optimizer access to the same total number of GOs for QAOA optimization as for the NDAR implementation.
Note that standard QAOA is given also proportionally more function calls compared to a single NDAR step (i.e., $t=i^{\max}*150$).

As indicated in the main text, the experimental samples analysis for the 82-qubit system revealed that, interestingly, the bitstrings were concentrated close to $\ket{1\dots 1}$ physical state, as opposed to $\ket{0\dots 0}$ (the steady state of the amplitude damping channel). 
We suspect that this is due to overheating of the chip, but we intend to investigate the physical causes of observed effects in future work.
We note that since the Hamiltonians we implemented consisted of only $ZZ$ interactions, they exhibit $Z_2$ symmetry, meaning that the energy remains invariant under a global $X^{\otimes n}$ transformation.
This means that, for Hamiltonians considered in this work, $\ket{0\dots 0}$ has always the same energy as $\ket{1\dots 1}$. 
The consequence for the NDAR implementation is such that in practice the algorithm was not affected by whether we assumed that the physical attractor is $\ket{0\dots 0}$ or $\ket{1 \dots 1}$.

Whenever the approximation ratio is reported, we calculate $\frac{E}{E_{\mathrm{GS}}}$, where $E_{\mathrm{GS}}$ is the best-found solution obtained either via branch-and-bound SDP technique from Ref.~\cite{Baccari2020} or via simulated annealing package PySA ~\cite{PYSA2023}.

% \subsection{Bootstrapping details}\label{app:experiments_details:bootstrapping}

% To estimate confidence intervals for correlation coefficients reported in the description of Fig.~\ref{fig:exp:NDAR} in the main text, we used functionalities of the standard statistical analysis package, \texttt{scipy.stats}. 
% The \texttt{bootstrap} function was used with hyperparameters $\mathrm{seed}=42$, $\mathrm{method}=\mathrm{"BCa"}$ (referring to 'bias-corrected and accelerated' bootstrapping method~\cite{}), $\mathrm{confidence\_interval}=0.95$, and $\mathrm{n\_resamples}=10^5$.

\end{document}